\newcommand{\opt}{{\rm opt}}
\newcommand{\dist}{{\rm dist}}
\newcommand{\NN}{{\mathcal N}}
\newcommand{\Land}{{~\land~}}
\newcommand{\Lor}{{~\lor~}}
\newtheorem{definition}{Definition}
\newtheorem{theorem}{Theorem}
\newtheorem{lemma}{Lemma}
\Crefname{equation}{Equalities}{Equalities}
\title{Minimum Consistent Subset for Trees Revisited\thanks{This work was partially supported by
  JSPS KAKENHI JP18H04091, 
  JP20H00595, 
  JP20H05793, 
  JP21K11752, 
  JP22H00513, 
  JP23H03344, and 
  JP23KJ1066. 
}}
\author{
Hiroki Arimura\thanks{Hokkaido University.
\texttt{\{arim@ist, koba@ist, nocchi0524@eis\}.hokudai.ac.jp}} \and
Tatsuya Gima\thanks{Nagoya University, JSPS Research Fellow 
\texttt{gima@nagoya-u.jp}} \and
Yasuaki Kobayashi\footnotemark[1] \and
Hiroomi Nochide\footnotemark[1] \and
Yota Otachi\thanks{Nagoya University \texttt{otachi@nagoya-u.jp}}}
\begin{document}

\maketitle

\begin{abstract}
    In a vertex-colored graph $G = (V, E)$, a subset $S \subseteq V$ is said to be consistent if every vertex has a nearest neighbor in $S$ with the same color.
    The problem of computing a minimum cardinality consistent subset of a graph is known to be NP-hard.
    On the positive side, Dey et al. (FCT 2021) show that this problem is solvable in polynomial time when input graphs are restricted to bi-colored trees.
    In this paper, we give a polynomial-time algorithm for this problem on $k$-colored trees with fixed $k$.
\end{abstract}

\section{Introduction}
Let $G = (V, E)$ be an undirected graph and let $c\colon V \to [k]$ be a (not necessarily proper) coloring of $V$.
For $u, v \in V$, we denote by $\dist(u, v)$ the shortest path distance between $u$ and $v$ in $G$.
For a vertex $u \in V$ and $S \subseteq V$, the set of \emph{nearest neighbors} from $u$ in $S$, denoted $\NN_G(S, u)$, consists of vertices $v \in S$ such that
\begin{align*}
    \dist(u, v) = \min_{w \in S} \dist(u, w).
\end{align*}
In other words, $\NN_G(S, u)$ is the set of closest vertices in $S$ from $u$.
The distance between $u$ and its nearest neighbors in $S$ is denoted by $\dist(u, S)$ (i.e., $\dist(u, S) = \min_{w \in S}\dist(u, w)$).
For convenience, we define $\dist(u, S) = \infty$ when $S = \emptyset$.
We say that $S \subseteq V$ is a \emph{consistent subset} of $G$ if for every 
$u \in V$, $\NN_G(S, u)$ contains at least one vertex $v$ with $c(u) = c(v)$.
In this paper, we consider the following problem.

\begin{definition}[\textsc{Minimum Consistent Subset}] Given a vertex-colored graph $G = (V, E)$ with $c\colon V \to [k]$, the goal is to compute a minimum cardinality consistent subset of $G$.
\end{definition}

Motivated by applications in pattern recognition, \textsc{Minimum Consistent Subset} is studied under a geometric setting.
In this setting, we are given a set of colored points in the plane and asked to find a minimum size consistent subset of the points, where the consistency is defined in the same way as \textsc{Minimum Consistent Subset}.
Wilfong~\cite{Wilfong:IJCGA:Nearest:1992} showed that this problem is NP-hard for $3$-colored point sets.
Khodamoradi et al.~\cite{KhodamoradiKR:CALDAM:Consistent:2018} improved this hardness result by showing NP-hardness for bi-colored point sets. 
The complexity of \textsc{Minimum Consistent Subset} (on graphs) is discussed in \cite{BanerjeeBC:LATIN:Algorithms:2018}.
They showed that the problem is NP-hard even on bi-colored graphs.
Dey et al.~\cite{caldam:DeyMN:Minimum:2021} studied \textsc{Minimum Consistent Subset} on several simple classes of graphs and showed that the problem is solvable in polynomial time on paths, caterpillars, bi-colored spiders, and bi-colored comb graphs.
Dey et al.~\cite{fct:DeyMN:Minimum:2021} focused on \textsc{Minimum Consistent Subset} for trees.
They gave an $O(n^4)$-time algorithm for bi-colored $n$-vertex trees and left the complexity of \textsc{Minimum Consistent Subset} on trees with more than two colors as an open question.
Very recently, Manna and Roy~\cite{MannaR:arxiv:Some:2023} showed that \textsc{Minimum Consistent Subset} is NP-complete even on trees when the number of colors $k$ is given as part of the input.

In this paper, we complement the results of \cite{fct:DeyMN:Minimum:2021,MannaR:arxiv:Some:2023} by showing that \textsc{Minimum Consistent Subset} is solvable in polynomial time on trees with $k$ colors for any constant $k$.

\begin{theorem}\label{thm:main}
    There is an $O(2^{4k}n^{2k + 3})$-time algorithm for \textsc{Minimum Consistent Subset} on $k$-colored $n$-vertex trees.
\end{theorem}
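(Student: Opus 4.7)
The plan is to design a bottom-up dynamic programming on $T$ rooted at an arbitrary vertex $r$. For every vertex $v$, let $T_v$ denote the rooted subtree at $v$; the DP state at $v$ is a pair of $k$-tuples $\vec{a}=(a_1,\ldots,a_k)$ and $\vec{b}=(b_1,\ldots,b_k)$ with entries in $\{0,1,\ldots,n{-}1,\infty\}$, where $a_i$ is the distance from $v$ to the closest $S$-vertex of color $i$ inside $T_v$ and $b_i$ is the \emph{promised} distance from $v$ to the closest $S$-vertex of color $i$ outside $T_v$. Let $D(v,\vec{a},\vec{b})$ be the minimum of $|S \cap T_v|$ over all configurations of $S \cap T_v$ that realise the prescribed $\vec{a}$ at $v$ and, assuming the outside contributes at the promised distances $\vec{b}$, make every vertex of $T_v$ consistent (and $D=\infty$ if no such configuration exists). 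There are $O(n^{2k})$ states per vertex, and the final answer is $\min_{\vec{a}} D(r,\vec{a},(\infty,\ldots,\infty))$.

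At a leaf the DP is read off from the choice of whether the leaf lies in $S$, combined with a local consistency check. At an internal vertex $v$ with children $u_1,\ldots,u_d$, the recursion decides whether $v\in S$ and for each child selects a compatible state $(\vec{a}^{\,j},\vec{b}^{\,j})$. The tree geometry forces the compatibility constraints
\[
    a_i \;=\; \min\!\Bigl([v \in S \Land c(v)=i]\cdot 0,\ \min_{j}(1+a^{j}_i)\Bigr),
\]
\[
    b^{j}_i \;=\; \min\!\Bigl([v \in S \Land c(v)=i]\cdot 1,\ 1+b_i,\ \min_{j' \neq j}(2+a^{j'}_i)\Bigr),
\]
and the DP combines as $D(v,\vec{a},\vec{b})=[v\in S]+\sum_j D(u_j,\vec{a}^{\,j},\vec{b}^{\,j})$, minimised over all compatible choices. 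Consistency of $v$ itself is verified locally from $(\vec{a},\vec{b})$ by requiring that some nearest $S$-vertex of $v$ has color $c(v)$, i.e., $\min(a_{c(v)},b_{c(v)}) \le \min_i \min(a_i,b_i)$; consistency of strict descendants of $v$ is already certified by the child DP values.

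The main obstacle is the cross-dependence in the merge step: the promise $\vec{b}^{\,j}$ handed down to child $u_j$ depends on the $\vec{a}^{\,j'}$ chosen for every other child, so the children cannot be merged independently. I would break this coupling by processing the children sequentially, carrying as auxiliary state the running per-color aggregate $\alpha^{(t)}_i=\min_{j\le t}(1+a^{j}_i)$ together with, for each color, a bit indicating whether the current minimum is attained by two or more already-processed children; these flags are exactly what is needed to evaluate $\min_{j'\neq j}(2+a^{j'}_i)$ correctly when $u_j$ is the unique minimiser, and they contribute only a $2^{O(k)}$ factor. A symmetric $2^{O(k)}$ factor enters when one unfolds the Iverson-bracket choices in the formula for $b^{j}_i$ (which of the three cases is active for each color, inside and outside). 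With these refinements the merge becomes a routine transition, and amortising the per-child work across the tree yields the claimed $O(2^{4k}n^{2k+3})$ bound.
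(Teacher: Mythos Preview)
Your promise-based DP is a genuinely different route from the paper's. The paper never carries an ``outside promise'' $\vec b$; instead its state at $v$ is $(\ell,L,H,r_1,\dots,r_k)$, where $\ell=\dist(v,S)$ and $L$ is the set of colors realised at that distance, while $H$ records which colors are still inconsistent in $T_v$ and $r_i$ is a single threshold per color summarising how close an outside vertex may (or must) come. The crucial observation you are missing is that the full vector $\vec a$ is redundant: any $S$-vertex in $T_v$ that is not among the closest to $v$ can \emph{never} be a nearest neighbour of a vertex outside $T_v$, so $(\ell,L)$ suffices. This shrinks the state from $O(n^{2k})$ to $O(2^{2k}n^{k+1})$, and the paper's merge simply enumerates all pairs $(t',t'')$ of such tuples, which is $O(2^{4k}n^{2k+2})$ per child and yields the stated bound.

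With your state $(\vec a,\vec b)$ the bound does not follow. Even after fixing the merge (see below), one child step requires enumerating compatible triples; parameterising by $(\vec a^{\,(t-1)},\vec a^{\,t},\vec b^{\,(t)})$ already costs $\Theta(n^{3k})$, and I do not see how the min-plus style constraints let you decouple further. This gives $O(n^{3k+1})$ overall, which matches $O(n^{2k+3})$ only for $k=2$ and is strictly worse for every $k\ge 3$. So the theorem as stated is not established.

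Separately, your description of the sequential merge has a gap. The prefix aggregate $\alpha^{(t)}_i=\min_{j\le t}(1+a^j_i)$ together with multiplicity bits lets you compute $\min_{j'\ne j}(2+a^{j'}_i)$ only over \emph{already-processed} children; the promise $b^{t}_i$ you hand to child $u_t$ also depends on children $j'>t$ that have not been seen. The correct fix is to make the promise in the partial table refer to ``outside $T_v^{t}$'' (which includes the unprocessed subtrees $T_{u_{t+1}},\dots,T_{u_d}$); then $b^{t}_i = 1+\min(a^{(t-1)}_i,\,b^{(t)}_i)$ and $b^{(t-1)}_i=\min(b^{(t)}_i,\,1+a^{t}_i)$, and no second-minimum bookkeeping is needed. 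This repairs correctness but, as noted above, does not recover the running time.
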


We would like to mention that for $k = 2$, the algorithm of \cite{fct:DeyMN:Minimum:2021} runs in time $O(n^4)$, which is faster than ours.
They highly exploit the fact that input trees have exactly two colors, while we employ a standard dynamic programming algorithm on trees, which allows us to solve the problem with more than two colors.
We would also like to stress that our algorithm is conceptually simple (as it is dynamic programming) but the recurrence for computing optimal solutions for subproblems is rather involved due to a ``non-locality'' of consistent subsets (see~\Cref{fig:example}).
This would indicate that it is still challenging to extend our algorithm to more general classes of graphs, such as cactus, series-parallel graphs, and bounded-treewidth graphs.
\begin{figure}
    \centering
    \includegraphics{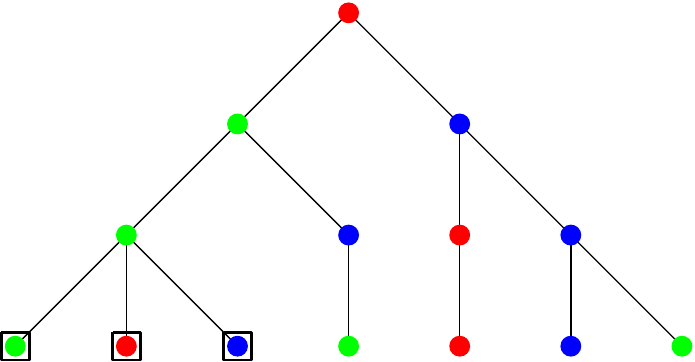}
    \caption{The figure depicts a $3$-colored tree and a consistent subset in it.
    The vertices in the consistent subset are surrounded by boxes.}
    \label{fig:example}
\end{figure}


\section{Preliminaries}
Let $G$ be a graph.
The vertex set and edge set of $G$ are denoted by $V(G)$ and $E(G)$, respectively.
Let $k$ be a positive integer.
We use $[k]$ to denote the set of positive integers not greater than $k$ and $[k]_0^\infty \coloneqq [k] \cup \{0, \infty\}$.
Let $c\colon V(G) \to [k]$ be a (not necessarily proper) vertex coloring of $G$.
In this paper, unless stated otherwise, we consider vertex-colored graphs, and every graph is implicitly associated to some vertex coloring $c$.
For $v \in G$, we denote by $N_G(v)$ the set of neighbors of $v$ in $G$.

Let $S \subseteq V(G)$.
A vertex $v$ is \emph{consistent} (in $(G, S)$) if there is a vertex $w \in \NN_{G}(S, v)$ with $c(w) = c(v)$.
A color $i$ is \emph{consistent} (in $(G, S)$) if every $v \in V(G)$ with $c(v) = i$ is consistent in $(G, S)$.
A vertex or color is \emph{inconsistent} (in $(G, S)$) if it is not consistent in $(G, S)$.
Note that every vertex (and then every color) is considered to be inconsistent when $S$ is empty.
We say that $S$ is \emph{consistent} in $G$ if every color in $[k]$ is consistent in $(G, S)$.


\section{A polynomial-time algorithm for \texorpdfstring{$k$-colored trees with fixed $k$}{}}
Let $T$ be a vertex-colored tree with $c\colon V(T) \to [k]$.
For $1 \le i \le k$, we let $C_i = c^{-1}(i)$, that is, $C_i$ is the set of vertices of $T$ colored in $i$.
We consider $T$ as a rooted tree by taking an arbitrary vertex $r$ as its root.
Let $v \in V(T)$.
We denote by $T_v$ the subtree of $T$ rooted at $v$.
For $\ell \in [n]_0^\infty$, $L, H \subseteq [k]$, and $r_1, \ldots, r_k \in [n]_0^\infty$, a vertex subset $S \subseteq V(T_v)$ is said to be \emph{admissible} for a tuple $t = (\ell, L, H, r_1, \ldots, r_k)$ (or \emph{$t$-admissible}) if it satisfies the following conditions:
\begin{itemize}
    \item $\ell = \displaystyle\dist(v, S)$;
    \item $L = \{i \in [k]: \NN_{T_v}(S, v) \text{ contains a vertex of color } i\}$;
    \item $H = \{i \in [k]: i \text{ is inconsistent in } (T_v, S)\}$, where
    color $i$ is considered to be consistent if it does not appear in $T_v$;
    \item for $1 \le i \le k$,
    \begin{align*}
        r_i = \begin{dcases}
            \min_{w}(\dist(w, S) - \dist(w, v)) & \text{if } i \in H\\
            \max_{w}(\dist(w, S) - \dist(w, v)) & \text{if } i \notin H
        \end{dcases},
    \end{align*}
    where the minimum in the first case is taken over all inconsistent $w \in V(T_v) \cap C_i$ and the maximum in the second case is taken over all (consistent) $w \in V(T_v) \cap C_i$.
    Note that, in the first case, $r_i = \infty$ when $S = \emptyset$, and, in the second case, we define $r_i = 0$ when $V(T_v) \cap C_i = \emptyset$.
    When $r_i < 0$, we define $r_i = 0$ instead.
\end{itemize}
\begin{figure}
    \centering
    \includegraphics{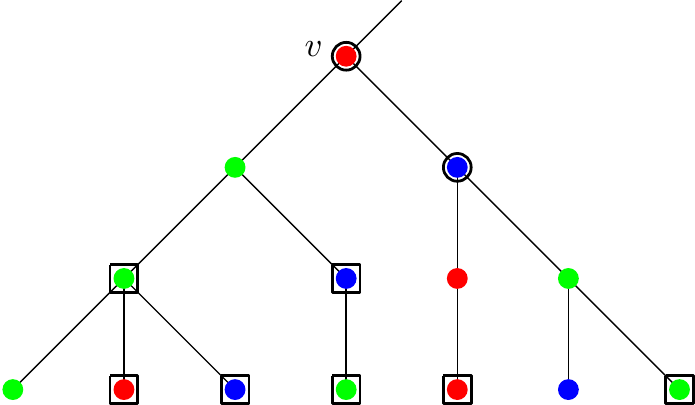}
    \caption{The figure illustrates an example of a $t$-admissible subset $S$ in $T_v$ for $t = (2, \{1, 2\}, \{0, 1\}, 2, 0, 1)$.
    Red, green, and blue are considered as $0$, $1$, and $2$, respectively.
    The vertices in $S$ are indicated by boxes and the inconsistent vertices are indicated by circles. 
    }
    \label{fig:def-opt}
\end{figure}
\Cref{fig:def-opt} illustrates a specific example for a $t$-admissible subset $S$ for some tuple $t$.
Note that for a (connected) subtree $T'$ of $T$ and $u, w \in V(T')$, we have $\dist_{T'}(u, w) = \dist_T(u, w)$, and hence we may simply write $\dist(u, w)$ without specifying subscripts.
We denote by $\opt(v, t)$ the minimum cardinality of a $t$-admissible subset in $T_v$.
If there is no $t$-admissible subset in $T_v$, we define $\opt(v, t) = \infty$. 
Given this, a $t$-admissible vertex subset $S \subseteq V(T)$ is a consistent subset of $T$ for $t = (\ell, L, H, r_1, \ldots r_k)$ if and only if $H = \emptyset$, that is, all colors in $[k]$ are consistent in $(T, S)$.
Thus, $\min_t\opt(r, t)$ is the minimum cardinality of a consistent subset of $T$, where the minimum is taken over all tuples $t = (\ell, L, H, r_1, \ldots r_k)$ with $H = \emptyset$.

The idea behind the above (slightly less intuitive) definition and our dynamic programming is as follows.
For each $v \in V(T)$ and each possible tuple $t = (\ell, L, H, r_1, \ldots, r_k)$, we compute $\opt(v, t)$ in a bottom-up manner.
Let $S \subseteq V(T_v)$ that is admissible for $t$. 
To construct a consistent subset $S^*$ of the whole tree $T$ with $S^* \cap V(T_v) = S$, we need to carefully take the consistency between $S$ and $S^* \setminus S$ into account.

First, each vertex in $S$ may become a nearest neighbor in $S^*$ for a vertex $w \in V(T) \setminus V(T_v)$.
This nearest neighbor in $S$ must be a closest vertex from $v$, that is, it belongs to $\NN_{T_v}(S, v)$.
The values of $\ell$ and $L$ give sufficient information to $w$ to have such a nearest neighbor within $T_v$.
Next, suppose that $T_v$ has a vertex $w \in C_i$ such that $\NN_{T_v}(S, w)$ has no vertex of color $i$, that is, vertex $w$ (and color $i$) is inconsistent in $(T_v, S)$.
By the definition of $H$, we have $i \in H$.
To ensure the consistency of $S^*$, $S^* \setminus S$ must contain a vertex $x \in V(T) \setminus V(T_v)$ of color $i$ such that $\dist(w, x) \le \dist(w, S)$.
This implies that
\begin{align*}
    \dist(v, x) = \dist(w, x) - \dist(w, v) \le \dist(w, S) - \dist(w, v).  
\end{align*}
The values of $r_i$ and $H$ convey this information to the ``outside'' of $T_v$.
Finally, let $w$ be a vertex of color $i$ in $T_v$.
Suppose that $w$ is consistent in $(T_v, S)$, that is, $w$ has a nearest neighbor $x$ of the same color $i$.
If $S^* \setminus S$ has a vertex $y$ of color $j \neq i$ that is strictly closer than $x$ to $w$, the vertex $y$ may destroy the consistency of $S^*$.
More precisely, suppose that $w \in V(T_v) \cap C_i$ has a nearest neighbor $x \in \NN_{T_v}(S, w)$ of color $i$.
Then, $S^*$ cannot contain $y \in V(T) \setminus V(T_v)$ of color $j \neq i$ with $\dist(w, y) < \dist(w, x)$ unless $S^*$ includes another vertex $z$ of color $i$ with $\dist(w, z) \le \dist(w, y)$.
This implies that any vertex $y$ of color $j \neq i$ in $S^* \setminus S$ satisfies
\begin{align*}
    \dist(v, y) = \dist(w, y) - \dist(w, v) \ge \dist(w, S \cap C_i) - \dist(w, v) = \dist(w, S) - \dist(w, v)
\end{align*}
unless $S^* \setminus S$ includes $z$ of color $i$ with $\dist(w, z) \le \dist(w, y)$.
Let us note that the last equality follows from the fact that $i \notin H$.
The values of $r_i$ and $H$ will be used for preventing $S^*$ from including such a vertex $y$.

Now, we describe an algorithm to compute $\opt(v, t)$ for $v \in V(T)$ and tuple $t = (\ell, L, H, r_1, \ldots, r_k)$ in a bottom up manner.
We distinguish the following two cases.

\paragraph{Leaf case.}
Suppose that $v$ is a leaf of $T$.
Then, there are two possible vertex subsets $\emptyset$ and $\{v\}$ in $T_v$.
The following equality follows from the definition.
\begin{align*}
    \opt(v, t) = \begin{dcases}
        0 & \text{if }\ell = \infty \Land L = \emptyset \Land H = \{c(v)\} \Land r_{c(v)} = \infty \Land r_i = 0\ (\forall i \in [k] \setminus \{c(v)\})\\
        1 & \text{if } \ell = 0 \Land L = \{c(v)\} \Land H = \emptyset \Land r_1 = \cdots = r_k = 0\\
        \infty & \text{otherwise}
    \end{dcases}.
\end{align*}

\paragraph{Internal vertex case.}
Suppose that $v$ is an internal vertex of $T$.
Let $v_1, \ldots, v_q$ be the children of $v$ in $T$.
We extend the notation of $T_v$ as follows.
We denote by $T_v^0$ the subtree of $T$ containing exactly one vertex $v$, and for $1 \le j \le q$, by $T_v^j$ the subtree of $T$ consisting of all vertices in $T_v^{j - 1}$ and $T_{v_j}$, that is, $T_v^j$ is obtained from $T_v$ by deleting all vertices in $T_{v_{j'}}$ for all $j' > j$.
We also extend the definition of $\opt$ as follows.
The definition of $t$-admissibility can be extended for $T_{v}^j$ by just replacing $T_v$ with $T_{v}^j$.
We define $\opt(v, j, t)$ as the minimum cardinality of a $t$-admissible subset of $T_{v}^j$.
Clearly, we have $\opt(v, t)  = \opt(v, q, t)$.
Since $T_{v}^0$ consists of the single vertex $v$, $\opt(v, j, t)$ for $j = 0$ can be computed as in the leaf case.
Thus, in the following, we consider the other case $j \ge 1$.

Let $j \ge 1$.
For the sake of notational convenience, we write $T^*$, $T'$, and $T''$ to denote $T^j_v$, $T^{j-1}_{v}$ and $T_{v_j}$, respectively. (See \Cref{fig:LR-tree}.)
\begin{figure}
    \centering
    \includegraphics[width=0.25\textwidth]{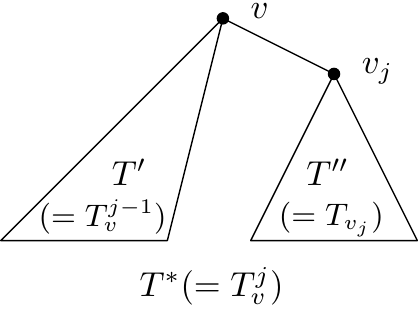}
    \caption{The figure illustrates subtrees $T'$ and $T''$ in $T^*$.}
    \label{fig:LR-tree}
\end{figure}
Let $t' = (\ell', L', H', r'_1, \ldots, r'_k)$ and $t'' = (\ell'', L'', H'', r''_1, \ldots, r''_k)$ be two tuples, and let $S'$ and $S''$ be $t'$-admissible and $t''$-admissible subsets of $T'$ and $T''$, respectively.
Let $S = S' \cup S''$.
Note that $S' \cap S'' = \emptyset$.
In the following, we consider a tuple $t = (\ell, L, H, r_1, \ldots, r_k)$ for which $S$ is admissible in $T^*$.

The values of $\ell$ and $L$ are easily determined from $\ell', \ell'', L', L''$. Since $\ell' = \dist(v, S')$ and $\ell''+1 = \dist(v_j, S'') + 1 = \dist(v, S'')$,
we have 
\begin{align}\label{exp:ell}
    \ell = \dist(v, S) = \min(\dist(v, S'), \dist(v, S'')) = \min(\ell', \ell'' + 1).
\end{align}
Moreover, $L$ is defined as:
\begin{align}\label{exp:L}
    L = \begin{dcases}
       L' &  \text{if }\ell' < \ell'' + 1\\
       L'' & \text{if } \ell' > \ell''+ 1\\
       L' \cup L'' &\text{if } \ell' = \ell'' + 1
    \end{dcases}.
\end{align}

To see the values of $H$ and $r_i$ for $i \in [k]$, we further distinguish two cases.
In the following, we fix a color $i \in [k]$.

\paragraph{Case I: color $i \in [k]$ is consistent in $(T^*, S)$.}
Suppose that color $i \in [k]$ is consistent in $(T^*, S)$.
In this case, every vertex $w \in V(T^*) \cap C_i$ has a vertex $x \in \NN_{T^*}(S, w)$ with $c(x) = i$.

Suppose first that color $i$ is consistent in $(T', S')$, that is, every $w \in V(T') \cap C_i$ is consistent in $(T', S')$.
If $\dist(w, S') \le \dist(w, S'')$ for every $w \in V(T') \cap C_i$, then all vertices in $V(T') \cap C_i$ are indeed consistent in $(T^*, S)$ as $\NN_{T'}(S', w) \subseteq \NN_{T^*}(S, w)$.
Note that the condition $\dist(w, S') \le \dist(w, S'')$ can be equivalently transformed as:
\begin{align*}
    \dist(w, S') \le \dist(w, S'') &\iff \dist(w, S') - \dist(w, v) \le \dist(w, S'') - \dist(w, v)\\
    &\iff \dist(w, S') - \dist(w, v) \le \ell'' + 1.
\end{align*}
Since $r'_i = \max_{w \in V(T') \cap C_i} (\dist(w, S') - \dist(w, v))$, the above condition is simply written as $r'_i \le \ell'' + 1$.
Otherwise, there is $w \in V(T') \cap C_i$ such that $\dist(w, S') > \dist(w, S'')$.
As $\NN_{T^*}(S, w) = \NN_{T''}(S'', v_j)$, $L''$ must contain color $i$.
Hence, these two cases require that at least one of $r'_i \le \ell'' + 1$ or $i \in L''$ holds when $i \notin H'$.

Symmetrically, suppose that color $i$ is consistent in $(T'', S'')$, that is, every vertex $w \in V(T'') \cap C_i$ is consistent in $(T'', S'')$.
Then, at least one of $\dist(w, S') \ge \dist(w, S'')$ for every $w \in V(T'') \cap C_i$ or $i \in L'$ is satisfied.
The first condition is equivalent to $r''_i \le \ell' + 1$ as
\begin{align*}
    \dist(w, S'') \le \dist(w, S') &\iff \dist(w, S'') - \dist(w, v_j) - 1 \le \dist(w, S') - \dist(w, v_j) - 1\\
    &\iff \dist(w, S'') - \dist(w, v_j) \le  \ell' + 1.
\end{align*}
Thus, $r''_i \le \ell' + 1$ or $i \in L'$ is satisfied when $i \notin H''$.

Suppose next that color $i$ is inconsistent in $(T', S')$, that is, there is a vertex $w \in V(T') \cap C_i$ that is inconsistent in $(T', S')$.
Then, to ensure that color $i$ is consistent in $(T^*, S)$, there must be a nearest neighbor $x \in S''$ of color $i$ such that $\dist(w, S') \ge \dist(w, x)$.
Moreover, such a vertex $x$ belongs to $\NN_{T'}(S'', v_j)$, which means that $i \in L''$.
The above inequality is equivalent to
\begin{align*}
   \dist(w, S') \ge \dist(w, x) &\iff \dist(w, S') - \dist(w, v) \ge \dist(w, x) - \dist(w, v)\\
   &\iff \dist(w, S') - \dist(w, v) \ge \ell'' + 1.
\end{align*}
This condition must hold for all $w \in V(T') \cap C_i$ that are inconsistent in $(T', S')$.
Thus, we have $r'_i \ge \ell'' + 1$ when $i \in H'$.
Symmetrically, suppose that color $i$ is inconsistent in $(T'', S'')$, that is, there is a vertex $w$ that is inconsistent in $(T'', S'')$.
Then, we have $i \in L'$ and $r''_i \ge \ell' + 1$ when $i \in H''$.

To summarize, we say that color $i$ is \emph{left consistent} (for $(T^*, S)$) if one of the following conditions holds:
\begin{itemize}
    \item $i \notin H' \Land (r'_i \le \ell'' + 1 \Lor i \in L'')$;
    \item $i \in H' \Land i \in L'' \Land r'_i \ge \ell'' + 1$.
\end{itemize}
Similarly, we say that color $i$ is \emph{right consistent} (for $(T^*, S)$) if one of the following conditions holds:
\begin{itemize}
    \item $i \notin H'' \Land (r''_i \le \ell' + 1 \Lor i \in L')$;
    \item $i \in H'' \Land i \in L' \Land r''_i \ge \ell' + 1$.
\end{itemize}

The following lemma summarizes the above observations.
\begin{lemma}\label{lem:case1}
    A color $i$ is consistent in $(T^*, S)$ if it is left and right consistent.
\end{lemma}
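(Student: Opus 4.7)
The plan is to pick an arbitrary vertex $w \in V(T^*) \cap C_i$ and exhibit a witness $y \in S$ with $c(y) = i$ and $\dist(w, y) = \dist(w, S)$. Since $T^*$ is obtained from $T'$ and $T''$ by joining them through the edge $v v_j$, the vertex $w$ lies in exactly one of $V(T')$ or $V(T'')$; the symmetry between the definitions of \emph{left consistent} and \emph{right consistent} makes it enough to treat $w \in V(T')$ using the left-consistent hypothesis, the case $w \in V(T'')$ then following verbatim by swapping primes and double primes.

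The first step is the structural identity $\dist(w, S'') = \dist(w, v) + 1 + \ell''$, which holds because any path from $w \in V(T')$ to a vertex of $S'' \subseteq V(T'')$ must traverse the edge $v v_j$ and because $\ell'' = \dist(v_j, S'')$. This identity turns the symbolic conditions ``$r'_i \le \ell'' + 1$'', ``$r'_i \ge \ell'' + 1$'', and ``$i \in L''$'' into concrete statements about $\dist(w, S')$ versus $\dist(w, S'')$ and about which colors realize the latter.

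Next, I would split on whether $w$ is consistent in $(T', S')$. In the consistent sub-case a color-$i$ neighbor $x \in \NN_{T'}(S', w)$ is already available inside $S'$: if $\dist(w, S') \le \dist(w, S'')$ then $x$ remains a nearest neighbor of $w$ in the whole of $S$, and otherwise I would argue that $i \in L''$ must hold, which supplies a color-$i$ vertex of $\NN_{T''}(S'', v_j)$ at distance exactly $\dist(w, S'')$ from $w$; justifying this amounts to ruling out the ``$r'_i \le \ell'' + 1$'' alternative via the max-over-consistent definition of $r'_i$ when $i \notin H'$, while when $i \in H'$ the second clause of \emph{left consistent} delivers $i \in L''$ directly. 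In the inconsistent sub-case one has $i \in H'$ by definition, so the second clause of \emph{left consistent} supplies both $i \in L''$ and $r'_i \ge \ell'' + 1$; the latter, applied to this particular inconsistent $w$ via the min-over-inconsistent definition of $r'_i$, gives $\dist(w, S') \ge \dist(w, S'')$, and $i \in L''$ again yields the required witness.

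The main obstacle I expect is the bookkeeping for the two parallel definitions of $r'_i$ (maximum over consistent color-$i$ vertices of $T'$ when $i \notin H'$, minimum over inconsistent such vertices when $i \in H'$) and matching each disjunct in the definition of \emph{left consistent} with the correct sub-case (consistency status of $w$ together with whether $\dist(w, S')$ is at most or strictly greater than $\dist(w, v) + 1 + \ell''$); once this translation is in place, the right-consistent argument for $w \in V(T'')$ is immediate.
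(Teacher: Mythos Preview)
Your proposal is correct and follows essentially the same argument as the paper: both use the identity $\dist(w,S'') = \dist(w,v) + \ell'' + 1$ and a case split driven by the two clauses of the left-consistent definition, invoking the max/min interpretation of $r'_i$ to compare $\dist(w,S')$ with $\dist(w,S'')$. The only cosmetic difference is that you split on the consistency status of the individual vertex $w$ in $(T',S')$ whereas the paper splits on whether $i \in H'$; your organization is in fact slightly more complete, since it explicitly covers the sub-case $i \in H'$ with $w$ consistent in $(T',S')$, which the paper's proof leaves implicit.
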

\begin{proof}
    By symmetry, it suffices to show that every vertex in $w \in V(T') \cap C_i$ is consistent in $(T^*, S)$ when color $i$ is left consistent.
    Suppose $i \notin H'$.
    If $i \in L''$, then both $\NN_{T}(S', w)$ and $\NN_{T}(S'', w)$ contains vertices of color $i$.
    Hence, $i$ is consistent in $(T^*, S)$.
    Otherwise, we have $r'_i \le \ell'' + 1 $.
    Then, for $w \in V(T') \cap C_i$ and $x \in S''$,
    \begin{align*}
        \dist(w, S') \le r'_i + \dist(w, v) \le \ell'' + 1 + \dist(w, v) \le \dist(w, x).
    \end{align*}
    This implies that $w$ is consistent in $(T^*, S)$ as $i$ is consistent in $(T', S')$.
    
    Suppose $i \in H'$.
    Let $w \in V(T') \cap C_i$ be a vertex that is inconsistent in $(T', S')$.
    Since $i \in L''$, there is a vertex $x \in \NN_{T''}(S'', v_j)$ of color $i$.
    As $r'_i \ge \ell'' + 1$, we have
    \begin{align*}
        \dist(w, S') \ge r_i + \dist(w, v) \ge \ell'' + 1 + \dist(w, v) = \dist(w, x).
    \end{align*}
    Thus, we have $x \in \NN_{T}(S, w)$, implying that $w$ is consistent in $(T^*, S)$.
\end{proof}

Next, we turn to considering the value of $r_i$ when $i$ is consistent in $(T^*, S)$.
The value $r_i$ is decomposed as
\begin{align*}
    r_i &= \max_{w \in V(T^*) \cap C_i} (\dist(w, S) - \dist(w, v))\\
    &= \max\left( \max_{w \in V(T') \cap C_i}( \dist(w, S) - \dist(w, v)), \max_{w \in V(T'') \cap C_i} (\dist(w, S) - \dist(w, v)) \right).
\end{align*}
Let $r_i^{\rm left} = \max_{w \in V(T') \cap C_i}(\dist(w, S) - \dist(w, v))$ be the first term in the above equality and let $w^* \in V(T') \cap C_i$ such that $r_i^{\rm left} = \dist(w^*, S) - \dist(w^*, v)$.
Suppose that $i \notin H'$ and $r'_i \le \ell'' + 1$.
As $r'_i \le \ell'' + 1$, we have $\dist(w^*, S') \le \dist(w^*, x)$ for any $x \in S''$, implying that $\dist(w^*, S) = \dist(w^*, S')$.
Thus, we have $r_i^{\rm left} = r'_i$.
Otherwise, there are two cases: (I-1) $i \notin H'$ and $r'_i > \ell'' + 1$ or (I-2) $i \in H'$ and $r'_i \ge \ell'' + 1$.
In both cases, there is a vertex $w \in V(T') \cap C_i$ such that $\dist(w, S') \ge \dist(w, S'')$.
Then, we have
\begin{align*}
    \dist(w, S) - \dist(w, v) = \dist(w, S'') - \dist(w, v) = \dist(w, v) + \ell'' + 1 - \dist(w, v) = \ell'' + 1.
\end{align*}
To compute $r^{\rm left}_i$, the following lemma comes in handy.

\begin{lemma}\label{lem:case1:update}
    Let $w_1, w_2 \in V(T') \cap C_i$ such that $\dist(w_1, S) - \dist(w_1, v) \ge \dist(w_2, S) - \dist(w_2, v)$.
    Suppose that $\dist(w_2, S') \ge \dist(w_2, S'')$.
    Then,
    \begin{align*}
        \dist(w_1, S) - \dist(w_1, v) = \dist(w_2, S) - \dist(w_2, v) = \ell'' + 1.
    \end{align*}
\end{lemma}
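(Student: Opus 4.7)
The plan is to exploit the structural fact that the subtrees $T'$ and $T''$ are joined at the edge $vv_j$, so every path from a vertex of $T'$ to a vertex of $T''$ passes through both $v$ and $v_j$. Consequently, for any $w \in V(T')$ and any $y \in V(T'')$, we have $\dist(w, y) = \dist(w, v) + 1 + \dist(v_j, y)$, which gives the key identity
\begin{align*}
    \dist(w, S'') = \dist(w, v) + 1 + \dist(v_j, S'') = \dist(w, v) + 1 + \ell''
\end{align*}
for every $w \in V(T')$, since $\ell'' = \dist(v_j, S'')$.

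First, I would handle $w_2$. Because $\dist(w_2, S') \ge \dist(w_2, S'')$ by hypothesis, the nearest neighbor of $w_2$ in $S = S' \cup S''$ is attained in $S''$, so $\dist(w_2, S) = \dist(w_2, S'')$. Applying the identity above to $w_2 \in V(T')$ yields $\dist(w_2, S) - \dist(w_2, v) = \ell'' + 1$, which is exactly the right-hand equality in the conclusion.

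Next, I would handle $w_1$. The hypothesis on $w_1$ and the value just computed for $w_2$ immediately give the lower bound $\dist(w_1, S) - \dist(w_1, v) \ge \ell'' + 1$. For the matching upper bound, since $S'' \subseteq S$ we have $\dist(w_1, S) \le \dist(w_1, S'')$, and applying the same identity to $w_1 \in V(T')$ gives $\dist(w_1, S'') - \dist(w_1, v) = \ell'' + 1$. Combining the two inequalities pins down $\dist(w_1, S) - \dist(w_1, v) = \ell'' + 1$, completing the proof.

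I do not anticipate any real obstacle here: the entire argument is a short distance computation, and the only thing that requires care is making the structural identity $\dist(w, S'') = \dist(w, v) + 1 + \ell''$ for $w \in V(T')$ explicit before invoking it twice.
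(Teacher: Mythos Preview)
Your argument is correct. It follows the same outline as the paper's proof for the $w_2$ step, deriving $\dist(w_2,S)-\dist(w_2,v)=\ell''+1$ from $\dist(w_2,S')\ge\dist(w_2,S'')$ and the structural identity $\dist(w,S'')=\dist(w,v)+1+\ell''$ for $w\in V(T')$.

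For the $w_1$ step the paper takes a slightly different route: it performs a case split on whether $\dist(w_1,S')\ge\dist(w_1,S'')$, handling the first case directly and deriving a contradiction in the second (getting $\dist(w_1,S)-\dist(w_1,v)<\ell''+1$, against the hypothesis). Your version avoids both the case split and the contradiction by simply noting $\dist(w_1,S)\le\dist(w_1,S'')$ since $S''\subseteq S$, which immediately gives the upper bound $\dist(w_1,S)-\dist(w_1,v)\le\ell''+1$. This is a small but genuine streamlining of the paper's proof; nothing is lost, and the argument becomes a two-line sandwich rather than a proof by contradiction.
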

\begin{proof}
    As $\dist(w_2, S') \ge \dist(w_2, S'')$, we have $\dist(w_2, S) - \dist(w_2, v) = \ell'' + 1$.
    If $\dist(w_1, S') \ge \dist(w_1, S'')$, we also have $\dist(w_1, S) - \dist(w_1, v) = \ell' + 1$ and then we are done.
    Suppose for contradiction that $\dist(w_1, S') < \dist(w_1, S'')$.
    This implies that
    \begin{align*}
        \dist(w_1, S) - \dist(w_1, v) = \dist(w_1, S') - \dist(w_1, v) < \dist(w_1, S'') - \dist(w_1, v) = \ell'' + 1.
    \end{align*}
    However, as $\dist(w_1, S) - \dist(w_1, v) \ge \dist(w_2, S) - \dist(w_2, v)$,
    \begin{align*}
        \dist(w_1, S) - \dist(w_1, v) \ge \dist(w_2, S) - \dist(w_2, v) = \ell'' + 1,
    \end{align*}
    which yields a contradiction.
\end{proof}

As $\dist(w^*, S) - \dist(w^*, v) \ge \dist(w, S) - \dist(w, v)$, by~\Cref{lem:case1:update}, we have
\begin{align*}
    r_i^{\rm left} = \dist(w^*, S) - \dist(w^*, v) = \dist(w, S) - \dist(w, v) = \ell'' + 1
\end{align*}
in both cases (I-1) and (I-2).
Therefore, by symmetrically considering $r^{\rm right}_i \coloneqq \max_{w \in V(T'') \cap C_i}(\dist(w, S) - \dist(w, v))$, it holds that
\begin{align}\label{exp:case1:r_i}
    r_i = \max(r^{\rm left}_i, r^{\rm right}_i),
\end{align}
where $r^{\rm left}_i = \min(r'_i, \ell'' + 1)$ and $r^{\rm right}_i = \min(\ell'+1, r''_i)$ under the assumption that $i \notin H$.

\paragraph{Case II: color $i \in [k]$ is inconsistent in $(T^*, S)$.}
The previous case (Case I) gives sufficient conditions to ensure the consistency of color $i$.
In fact, these conditions are necessary for the consistency.
We say that color $i$ is \emph{left inconsistent} (for $(T^*, S)$) if $i$ is not left consistent, that is, 
\begin{itemize}
    \item $i \notin H' \implies (r'_i > \ell'' + 1 \Land i \notin L'')$;
    \item $i \in H' \implies (i \notin L'' \Lor r'_i < \ell'' + 1)$.
\end{itemize}
Similarly, we say that color $i$ is \emph{right inconsistent} (for $(T^*, S)$) if $i$ is not right consistent, that is,
\begin{itemize}
    \item $i \notin H'' \implies (r''_i > \ell' + 1 \Land i \notin L')$;
    \item $i \in H'' \implies (i \notin L' \Lor r''_i < \ell' + 1)$.
\end{itemize}

\begin{lemma}\label{lem:case2}
    A color $i$ is inconsistent in $(T^*, S)$ if it is left or right inconsistent.
\end{lemma}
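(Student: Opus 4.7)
The plan is to prove the contrapositive-flavored statement directly by exhibiting, for each branch of the hypothesis, an explicit witness vertex $w \in V(T^*) \cap C_i$ that fails to have a same-colored nearest neighbor in $S$. By symmetry between $T'$ and $T''$, it suffices to treat the case when color $i$ is left inconsistent; then the witness will live in $V(T') \cap C_i$.

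The first step is a geometric fact that I will use repeatedly: since every vertex of $S''$ is separated from $V(T')$ by $v_j$, for any $w \in V(T')$ we have $\dist(w, S'') = \dist(w, v) + 1 + \ell''$ (or $\infty$ if $S'' = \emptyset$), and moreover $\NN_{T^*}(S, w) \cap S''$ is either empty (when $\dist(w, S') < \dist(w, S'')$) or equal to $\NN_{T''}(S'', v_j)$ (when $\dist(w, S') \ge \dist(w, S'')$). In particular, the colors appearing in $\NN_{T^*}(S, w) \cap S''$ are exactly those in $L''$, or none at all. This reduces the task to finding $w$ whose $S$-nearest neighbors, split into a $T'$-part and a $T''$-part, both avoid color $i$.

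I split on whether $i \in H'$. If $i \notin H'$, then left inconsistency gives $r'_i > \ell'' + 1$ and $i \notin L''$. Pick $w \in V(T') \cap C_i$ achieving $r'_i = \dist(w,S') - \dist(w,v)$; then $\dist(w,S') > \dist(w,v) + \ell'' + 1 = \dist(w, S'')$, so $\NN_{T^*}(S, w) = \NN_{T^*}(S, w) \cap S'' = \NN_{T''}(S'', v_j)$, which contains no vertex of color $i$ since $i \notin L''$. If $i \in H'$, pick $w \in V(T') \cap C_i$ that is inconsistent in $(T', S')$; by definition of $H'$ such $w$ exists. In the subcase $i \notin L''$, whichever of $\dist(w,S')$ and $\dist(w,S'')$ is smaller (or if they are equal), $\NN_{T^*}(S, w)$ is a subset of $\NN_{T'}(S', w) \cup \NN_{T''}(S'', v_j)$, and neither set contains a color-$i$ vertex (the first by inconsistency of $w$ in $(T', S')$, the second because $i \notin L''$). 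In the subcase $r'_i < \ell'' + 1$, sharpen the choice of $w$ to an inconsistent vertex realizing $r'_i = \dist(w,S') - \dist(w,v)$ (which exists since $r'_i$ is defined as the minimum over inconsistent witnesses); then $\dist(w,S') < \dist(w,v) + \ell'' + 1 = \dist(w, S'')$, so $\NN_{T^*}(S, w) = \NN_{T'}(S', w)$, again avoiding color $i$.

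In every branch the witness $w$ has no same-colored nearest neighbor in $S$ within $T^*$, so $w$ is inconsistent and hence color $i$ is inconsistent in $(T^*, S)$, as required. The symmetric argument handles the right-inconsistent case with a witness in $V(T'') \cap C_i$, using the analogous formula $\dist(w, S') = \dist(w, v_j) + 1 + \ell'$ and $\NN_{T^*}(S, w) \cap S' \subseteq \NN_{T'}(S', v)$. The main obstacle is bookkeeping: one must be careful in the equality case $\dist(w, S') = \dist(w, S'')$ of subcase $i \in H'$, $i \notin L''$, because both halves of $\NN_{T^*}(S, w)$ contribute, and also careful to choose $w$ as a minimizer of the relevant quantity so that the strict inequalities coming from $r'_i$ translate correctly.
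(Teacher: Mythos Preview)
Your proof is correct and follows essentially the same approach as the paper: the same reduction by symmetry to the left-inconsistent case, the same split on $i \in H'$ versus $i \notin H'$, and the same witness choices (a maximizer of $\dist(w,S')-\dist(w,v)$ when $i \notin H'$; an inconsistent vertex, refined to a minimizer when $r'_i < \ell''+1$, when $i \in H'$). Your explicit preliminary observation that $\dist(w,S'')=\dist(w,v)+1+\ell''$ for $w\in V(T')$ and that $\NN_{T^*}(S,w)\cap S''$ is either empty or $\NN_{T''}(S'',v_j)$ makes the argument slightly cleaner than the paper's version, but the substance is the same.
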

\begin{proof}
    It suffices to show that there is a vertex $w \in V(T') \cap C_i$ that is inconsistent in $(T^*, S)$ if color $i$ is left inconsistent.
    
    Suppose $i \notin H'$.
    As $r'_i > \ell'' + 1$, there is a vertex $w \in V(T') \cap C_i$ such that
    \begin{align*}
        \dist(w, S') = r'_i + \dist(w, v) > \ell'' + 1 + \dist(w, v) = \dist(w, x)
    \end{align*}
    for some $x \in S''$.
    Note that $r'_i > \ell'' + 1$ implies that $\ell''$ takes a finite value and hence $S''$ is nonempty.
    Since $i \notin L''$, we cannot choose such a vertex $x \in \NN_{T''}(S'', v_j)$ of color $i$.
    Thus, $w$ is inconsistent in $(T^*, S)$.

    Suppose $i \in H'$.
    If $i \notin L''$, both $\NN_{T^*}(S', w)$ and $\NN_{T^*}(S'', w)$ have no vertices of color $i$, and hence $i$ is inconsistent in $(T^*, S)$.
    Otherwise, we have $r'_i < \ell'' + 1$.
    This implies that
    \begin{align*}
        \dist(w, S')  = r'_i + \dist(w, v) < \ell'' + 1 + \dist(w, v) = \dist(w, S'')
    \end{align*}
    for some $w \in V(T') \cap C_i$ that is inconsistent in $(T', S')$.
    Thus, $\dist(w, S') = \dist(w, S)$ and hence $\NN_{T^*}(S, w) = \NN_{T^*}(S', w)$, which yields that $i$ is inconsistent in $(T^*, S)$.
\end{proof}

Finally, we turn to considering the value of $r_i$ when $i$ is inconsistent in $(T^*, S)$.
The value $r_i$ is decomposed as
\begin{align*}
        r_i &= \min_{w \in V(T^*) \cap C_i} (\dist(w, S) - \dist(w, v))\\
    &= \min\left( \min_{w \in V(T') \cap C_i}( \dist(w, S) - \dist(w, v)), \min_{w \in V(T'') \cap C_i} (\dist(w, S) - \dist(w, v)) \right),
\end{align*}
where all the minimum values in the above equality are taken among all inconsistent vertices $w$.
Let $r_i^{\rm left} = \min_{w \in V(T') \cap C_i}(\dist(w, S) - \dist(w, v))$ be the first term in the above equality and let $w^* \in V(T') \cap C_i$ be inconsistent such that $r_i^{\rm left} = \dist(w^*, S) - \dist(w^*, v)$.
Suppose $i \in H'$ and $r'_i < \ell'' + 1$.
As $r'_i < \ell'' + 1$, we have $\dist(w^*, S) < \dist(w^*, S'')$, meaning that $\dist(w^*, S) = \dist(w^*, S')$.
Thus, $r^{\rm left}_i = r'_i$.
Suppose otherwise.
There are two cases: (II-1) $i \in H'$, $r'_i \ge \ell'' + 1$, and $i \notin L''$ or (II-2) $i \notin H'$, $r'_i > \ell'' + 1$, and $i \notin L''$.
As $r'_i \ge \ell'' + 1$, there is $w \in V(T') \cap C_i$ such that
\begin{align*}
    \dist(w, S') = r'_i + \dist(w, v) \ge \ell'' + 1 + \dist(w, v) = \dist(w, x)
\end{align*}
for some $x \in \NN_{T^*}(S'', w)$, which means that $\dist(w, S') \ge \dist(w, S'')$.
Moreover, this vertex $w$ is inconsistent in $(T^*, S)$ since (II-1) it is inconsistent in $(T', S'')$ and we cannot choose such $x$ of color $i$ or (II-2) $\NN_{T^*}(S, w) = \NN_{T^*}(S'', w) = \NN_{T^*}(S'', v)$ has no vertices of color $i$.

The following lemma is a ``reverse'' counterpart of \Cref{lem:case1:update}.

\begin{lemma}\label{lem:case2:update}
    Let $w_1, w_2 \in V(T') \cap C_i$ such that $\dist(w_1, S) - \dist(w_1, v) \ge \dist(w_2, S) - \dist(w_2, v)$.
    Suppose that $\dist(w_1, S') \ge \dist(w_1, S'')$.
    Then,
    \begin{align*}
        \dist(w_1, S) - \dist(w_1, v) = \dist(w_2, S) - \dist(w_2, v) = \ell'' + 1.
    \end{align*}
\end{lemma}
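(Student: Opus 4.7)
The plan is to parallel the proof of \Cref{lem:case1:update}, flipping inequality directions as needed for the ``min'' setting of Case II. The first step is to use the hypothesis $\dist(w_1, S') \ge \dist(w_1, S'')$ to pin down $w_1$'s contribution directly:
\[
\dist(w_1, S) = \dist(w_1, S'') = \dist(w_1, v_j) + \ell'' = \dist(w_1, v) + \ell'' + 1,
\]
whence $\dist(w_1, S) - \dist(w_1, v) = \ell'' + 1$.

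Next I would handle $w_2$. The universal bound $\dist(w_2, S) \le \dist(w_2, S'') = \dist(w_2, v) + \ell'' + 1$ immediately gives $\dist(w_2, S) - \dist(w_2, v) \le \ell'' + 1$ (the hypothesis $\dist(w_1, \cdot) \ge \dist(w_2, \cdot)$ reiterates the same bound). The substantive task is the matching lower bound, for which it suffices to establish $\dist(w_2, S') \ge \dist(w_2, S'')$: then $\dist(w_2, S) = \dist(w_2, S'')$ and the identity for $w_2$ follows by the same computation used for $w_1$.

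To prove $\dist(w_2, S') \ge \dist(w_2, S'')$, I would argue by contradiction. Assuming $\dist(w_2, S') < \dist(w_2, S'')$, the nearest neighbor of $w_2$ in $S$ lies in $S'$. Under the Case II context in which the lemma is invoked, namely that $w_2$ is inconsistent in $(T^*, S)$ and $r'_i \ge \ell'' + 1$ (subcases II-1 and II-2), one deduces that $w_2$ is inconsistent already in $(T', S')$: if $w_2$ were consistent there, its color-$i$ nearest neighbor in $S'$ would still be a nearest neighbor in $S$, contradicting the inconsistency in $(T^*, S)$. Hence $i \in H'$ and $\dist(w_2, S') - \dist(w_2, v) \ge r'_i \ge \ell'' + 1$, which contradicts $\dist(w_2, S') < \dist(w_2, v) + \ell'' + 1 = \dist(w_2, S'')$.

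The main obstacle is the asymmetry with \Cref{lem:case1:update}: there, the stated hypothesis combined with the universal upper bound immediately squeezes the ``max'' to $\ell'' + 1$, whereas here the same inequality direction only yields an upper bound on the ``min'', and the matching lower bound has to be extracted from the Case II context (the inconsistency of $w_2$ in $(T^*, S)$ together with $r'_i \ge \ell'' + 1$) rather than from the bare hypotheses themselves. Verifying that this context is genuinely available at every invocation of the lemma—and making that implicit assumption explicit in the write-up—is the delicate part of the argument.
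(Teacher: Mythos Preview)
Your proposal is sound, and in fact it repairs a genuine error in the paper's own argument. The paper tries to prove the lemma from the bare hypotheses alone: after deriving $\dist(w_1,S)-\dist(w_1,v)=\ell''+1$ and, under the contradictory assumption $\dist(w_2,S')<\dist(w_2,S'')$, the strict bound $\dist(w_2,S)-\dist(w_2,v)<\ell''+1$, it then asserts
\[
\dist(w_2,S)-\dist(w_2,v)\ \ge\ \dist(w_1,S)-\dist(w_1,v)=\ell''+1
\]
``by'' the hypothesis $\dist(w_1,S)-\dist(w_1,v)\ge\dist(w_2,S)-\dist(w_2,v)$. That inequality is simply reversed; the hypothesis only reproduces the upper bound $\le \ell''+1$ already obtained. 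In particular, the lemma as literally stated (with no further assumptions on $w_2$) is false: take any $w_2\in V(T')\cap C_i$ with $\dist(w_2,S')-\dist(w_2,v)<\ell''+1$ and $\dist(w_2,S')<\dist(w_2,S'')$.

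Your route differs from the paper precisely at this point: you recognise that the lower bound on $\dist(w_2,S)-\dist(w_2,v)$ cannot come from the stated hypotheses and instead pull it from the Case~II context in which the lemma is invoked (namely, $w_2=w^*$ is inconsistent in $(T^*,S)$, together with $r'_i\ge \ell''+1$ in (II-1) or $i\notin H'$ in (II-2)). Your contradiction argument is correct in both subcases: if $\dist(w_2,S')<\dist(w_2,S'')$ then $\NN_{T^*}(S,w_2)=\NN_{T'}(S',w_2)$, so the inconsistency of $w_2$ in $(T^*,S)$ forces inconsistency in $(T',S')$; in (II-2) this already contradicts $i\notin H'$, while in (II-1) it gives $\dist(w_2,S')-\dist(w_2,v)\ge r'_i\ge\ell''+1$, contradicting $\dist(w_2,S')<\dist(w_2,S'')$. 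The only cosmetic suggestion is to make the implicit hypothesis ``$w_2$ is inconsistent in $(T^*,S)$'' explicit in the lemma statement, since without it the claim does not hold.
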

\begin{proof}
    As $\dist(w_1, S') \ge \dist(w_1, S'')$, we have $\dist(w_1, S) - \dist(w_1, v) = \ell'' + 1$.
    If $\dist(w_2, S') \ge \dist(w_2, S'')$, we also have $\dist(w_2, S) - \dist(w_2, v) = \ell' + 1$ and then we are done.
    Suppose for contradiction that $\dist(w_2, S') < \dist(w_2, S'')$.
    This implies that
    \begin{align*}
        \dist(w_2, S) - \dist(w_2, v) = \dist(w_2, S') - \dist(w_2, v) < \dist(w_2, S'') - \dist(w_2, v) = \ell'' + 1.
    \end{align*}
    However, as $\dist(w_1, S) - \dist(w_1, v) \ge \dist(w_2, S) - \dist(w_2, v)$,
    \begin{align*}
        \dist(w_2, S) - \dist(w_2, v) \ge \dist(w_1, S) - \dist(w_1, v) = \ell'' + 1,
    \end{align*}
    which yields a contradiction.
\end{proof}

Since $w$ and $w^*$ are both inconsistent in $(T^*, S)$, we have $\dist(w^*, S) - \dist(w^*, v) \le \dist(w, S) - \dist(w, v)$.
By~\Cref{lem:case2:update}, we have
\begin{align*}
    r^{\rm left}_i = \dist(w^*, S) - \dist(w^*, v) = \dist(w, S) - \dist(w, v) = \ell'' + 1
\end{align*}
in both cases (II-1) and (II-2).

Therefore, by symmetrically considering $r^{\rm right}_i \coloneqq \min_{w \in V(T'') \cap C_i}(\dist(w, S) - \dist(w, v))$, it holds that
\begin{align}\label{exp:case2:r_i}
    r_i = \min(r^{\rm left}_i, r^{\rm right}_i),
\end{align}
where $r^{\rm left}_i = \min(r'_i, \ell'' + 1)$ and $r^{\rm right}_i = \min(\ell'+1, r''_i)$ under the assumption that $i \notin H$.

Now, we are ready to describe our recurrence for computing $\opt(v, j, t)$.
We say that a tuple $t = (\ell, L, H, r_1, \ldots, r_k)$ is \emph{compatible} with a pair of tuples $(t', t'')$, where $t' = (\ell', L', H', r'_1, \ldots, r'_k)$ and $t'' = (\ell'', L'', H'', r''_1, \ldots, r''_k)$, if $t$, $t'$, and $t''$ satisfy \Cref{exp:ell,exp:L,lem:case1,exp:case1:r_i,lem:case2,exp:case2:r_i}.
The correctness of the recurrence follows from \Cref{exp:ell,exp:L,lem:case1,exp:case1:r_i,lem:case2,exp:case2:r_i}.

\begin{lemma}\label{lem:rec}
    For $j \ge 1$ and tuple $t$,
    \begin{align*}
        \opt(v, j, t) = \min_{(t', t'')} (\opt(v, j-1, t') + \opt(v_j, t'')),
    \end{align*}
    where the minimum is taken over all pairs $(t', t'')$ such that $t$ is compatible with $(t', t'')$.
\end{lemma}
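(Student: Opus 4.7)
The plan is to prove the recurrence by showing both inequalities separately, leveraging the structural analysis already carried out in \Cref{exp:ell,exp:L,lem:case1,exp:case1:r_i,lem:case2,exp:case2:r_i}.

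For the inequality $\opt(v, j, t) \le \min_{(t', t'')} (\opt(v, j-1, t') + \opt(v_j, t''))$, I would fix a pair of tuples $(t', t'')$ such that $t$ is compatible with $(t', t'')$, and let $S'$ and $S''$ be $t'$-admissible and $t''$-admissible subsets achieving $\opt(v, j-1, t')$ and $\opt(v_j, t'')$, respectively. Setting $S = S' \cup S''$, I must show that $S$ is $t$-admissible in $T^*$. Note $S' \cap S'' = \emptyset$ since $V(T') \cap V(T'') = \emptyset$. The value $\dist(v, S) = \ell$ and the set $L$ of colors appearing among nearest neighbors of $v$ in $S$ follow directly from \Cref{exp:ell,exp:L}. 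For each color $i$, the characterization of consistency is supplied by \Cref{lem:case1} (if $i \notin H$, so $i$ is both left and right consistent by the compatibility conditions) and \Cref{lem:case2} (if $i \in H$, so $i$ is left or right inconsistent). Finally, the value $r_i$ is as required by \Cref{exp:case1:r_i} in the case $i \notin H$ and by \Cref{exp:case2:r_i} in the case $i \in H$. Thus $|S| = |S'| + |S''| = \opt(v, j-1, t') + \opt(v_j, t'')$ is an upper bound on $\opt(v, j, t)$.

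For the reverse inequality, I would start with an optimal $t$-admissible subset $S$ of $T^*$ and decompose it as $S' = S \cap V(T')$ and $S'' = S \cap V(T'')$. Let $t'$ and $t''$ be the (uniquely determined) tuples for which $S'$ and $S''$ are admissible in $T'$ and $T''$, respectively. By the very derivations preceding the statement of the lemma, $t$ satisfies \Cref{exp:ell,exp:L} with respect to $(t', t'')$. For each color $i$, the case analysis performed before \Cref{lem:case1,lem:case2} shows that if $i$ is consistent in $(T^*, S)$, then $i$ must be both left and right consistent (these are necessary conditions, not just sufficient), and the expression for $r_i$ in \Cref{exp:case1:r_i} follows from \Cref{lem:case1:update} applied to extremal witnesses in $V(T') \cap C_i$ and $V(T'') \cap C_i$. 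Symmetrically, if $i$ is inconsistent in $(T^*, S)$, then $i$ must be left or right inconsistent, and the formula in \Cref{exp:case2:r_i} follows from \Cref{lem:case2:update}. Consequently $t$ is compatible with $(t', t'')$, and $\opt(v, j, t) = |S| = |S'| + |S''| \ge \opt(v, j-1, t') + \opt(v_j, t'')$, which is at least the right-hand side minimum.

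The main obstacle is ensuring that the compatibility relation, assembled from the conditions in \Cref{exp:ell,exp:L,lem:case1,exp:case1:r_i,lem:case2,exp:case2:r_i}, captures exactly the right constraints in \emph{both} directions. The sufficient-condition direction is already established by \Cref{lem:case1,lem:case2}, but for the lower bound I have to verify that the constraints are also necessary; this is implicit in the discussion leading up to those lemmas (every ``then'' there is actually an ``if and only if'' once one fixes whether $i$ is consistent in $(T^*, S)$). I would make this point explicit by noting that the case analysis enumerates all possibilities, so the complementary conditions defining left/right consistency versus inconsistency partition the state space. The remaining verification is routine bookkeeping, chasing the definitions of $\ell, L, H, r_i$ through the identity $\dist(w, S) = \min(\dist(w, S'), \dist(w, S''))$ for every $w$.
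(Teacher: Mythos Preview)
Your proposal is correct and follows the same approach as the paper: the paper itself does not give a standalone proof of \Cref{lem:rec} but simply states that ``the correctness of the recurrence follows from \Cref{exp:ell,exp:L,lem:case1,exp:case1:r_i,lem:case2,exp:case2:r_i},'' and your two-inequality argument is exactly the natural unpacking of that sentence. Your explicit observation that the left/right consistency conditions partition the state space (so \Cref{lem:case1,lem:case2} together give an ``if and only if'') is a useful clarification that the paper leaves implicit.
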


Finally, we discuss the running time of evaluating the recurrence in \Cref{lem:rec}.
For each $v$ and $1 \le j \le q$, we can evaluate $\opt(v, j, t)$ for all tuple $t$ in time $O((n \cdot 2^k\cdot 2^k \cdot n^{k})^2) = O(2^{4k}n^{2k + 2})$ in total.
Therefore, we can evaluate $\opt(v, j, t)$ for all $v$, $j$, and $t$ in total time $O(2^{4k}n^{2k + 3})$.
From the values of $\opt(v, j, t)$, we can construct a minimum cardinality consistent subset of $T$ by a standard traceback technique with running time in $O(2^{4k}n^{k+3})$, completing the proof of \Cref{thm:main}.

\printbibliography
\end{document}